\documentclass[conference]{IEEEtran}
\IEEEoverridecommandlockouts

\usepackage{cite}
\usepackage{amsmath,amssymb,amsfonts}
\usepackage{algorithmic}
\usepackage{graphicx}
\usepackage{textcomp}
\usepackage{xcolor}
\usepackage{subfig}
\usepackage{amsthm}
\usepackage[normalem]{ulem}
\usepackage{cancel}

\usepackage{mathtools}      
\mathtoolsset{showonlyrefs} 

\usepackage[framemethod=TikZ]{mdframed}
\newmdenv[
  linewidth=0.5pt,
  roundcorner=2pt,
  skipabove=2pt,
  skipbelow=2pt,
  innerleftmargin=6pt,
  innerrightmargin=6pt,
  innertopmargin=6pt,
  innerbottommargin=6pt
]{algotitlebox}

\usepackage{epstopdf}   
\graphicspath{{figs/}}  

\def\BibTeX{{\rm B\kern-.05em{\sc i\kern-.025em b}\kern-.08em
    T\kern-.1667em\lower.7ex\hbox{E}\kern-.125emX}}

\newtheorem{theorem}{Theorem}[section]
\newtheorem{lemma}[theorem]{Lemma}   

\newcommand{\blue}[1]{\textcolor{black}{#1}}

\begin{document}

\title{{\huge Covert Routing with DSSS Signaling Against Cycle Detectors}

\thanks{This material is based on work supported in part by the Army Research
Office / Army Research Laboratory under Contract No. W911NF-23-2-0197.}
}


\author{%
\IEEEauthorblockN{%
Swapnil Saha\IEEEauthorrefmark{1}, 
Rahul Aggarwal\IEEEauthorrefmark{1}, 
Fikadu Dagefu\IEEEauthorrefmark{2},\\
Justin Kong\IEEEauthorrefmark{2}, 
Jihun Choi\IEEEauthorrefmark{2}, 
Brian Kim\IEEEauthorrefmark{2}, 
and Predrag Spasojevi\'c\IEEEauthorrefmark{1}}
\IEEEauthorblockA{\IEEEauthorrefmark{1}WINLAB, Rutgers University, New Brunswick, NJ, USA}
\IEEEauthorblockA{\IEEEauthorrefmark{2}U.S. Army Combat Capabilities Development Command (DEVCOM) Army Research Laboratory, Adelphi, MD, USA}
}


\maketitle

\begin{abstract}
This paper investigates covert multi-hop communication in wireless networks where an adversary employs a cyclostationary (cycle) detector to reveal hidden transmissions. The covert route employs direct-sequence spread spectrum (DSSS) signaling to ensure either maximum end-to-end covertness maximization or minimum latency minimization—under quality-of-service (QoS) and link budget constraints. Optimal bandwidth, transmit power, and spreading gain for each hop jointly satisfy reliability and either rate or covertness requirements. 
We show the equivalence between the covertness- and the {\em detection SNR gain}–based widest-path formulations and, hence,  enabling efficient route computation. Numerical simulations in a realistic $3D$ environment illustrate that $(i)$ end-to-end latency increases exponentially with the covertness requirement, $(ii)$ the end-to-end latency increase is super-linear
with the packet size $M$, and $(iii)$  cycle and energy detectors impose different latency behavior as a function of the message length and the covertness requirement. The proposed framework provides important insights into resource allocation and routing design for covert networks against advanced detection adversaries.

\end{abstract}
\begin{IEEEkeywords}
covert  communication, low probability of detection (LPD), detection of error probability (DEP), direct sequence spread spectrum (DSSS),  wireless routing, resource allocation
\end{IEEEkeywords}

\section{Introduction}

\label{sec:intro}
Low probability of detection (LPD) or covert communications \cite{hero2003secure} refers to a communication system that aims to hide its transmissions from a watchful adversary. 
Intuitively, reducing a wireless signal's transmission power and data rate lowers the likelihood of transmissions being detected by unintended observers. 
Multi-hop routing  improves covertness by enabling reduced-power hop-by-hop transmission. 
In addition, Direct Sequence Spread Spectrum (DSSS) modulation is an effective technique for achieving covert communication, as it spreads the message signal across a wider bandwidth, thereby modulating the signal to a level comparable to or exceeding the noise level and providing inherent covertness. Nevertheless, the statistical characteristics of the spreading codes typically exhibit characteristic features/patterns that an adversary can exploit to infer the presence of a transmitted signal \cite{koumpouzi2021improved,koumpouzi2022exploiting,fraisse2023power}.
Beyond security considerations, practical communication systems must also satisfy the Quality of Service (QoS) requirements \cite{aggarwal2025covertness,shi2023user}, such as low latency, high throughput, and reliability.

In~\cite{bash2013limits}, authors take an information-theoretic approach to study communication covertness,
 an approach that does not consider the limitations of a practical detector. 
 Authors in~\cite{yan2019low,kong2022covert,bash2013limits,aggarwal2024covert} study the covertness against an adversary employing an energy detector.
    Adversary that exploits the signal's cyclostationary properties (i.e., spectral correlations) has been studied extensively~\cite{gardner1992signal,aggarwal2025covertness}. We refer to such an adversary as a cycle detector. The authors~\cite{koumpouzi2021improved,koumpouzi2022exploiting,fraisse2023power} identify different strategies that enhance covertness against cycle detectors. 
   In~\cite{aggarwal2025covertness}, we compare the covertness of single link communication systems against  energy and cycle detectors under common communication QoS requirements.
Motivated by the multi-hop routing covertness advantage, several works optimize per-hop resource allocation and routing jointly \cite{xie2022incentive,kong2022covert,sheikholeslami2018multi,kim2024reinforcement} against energy detectors. On the other hand, to the best of our knowledge, no prior research has studied the covertness of multi-hop routing schemes against adversaries employing cycle detectors.
\begin{figure}
    \centering 
    \includegraphics[width=0.9\linewidth]{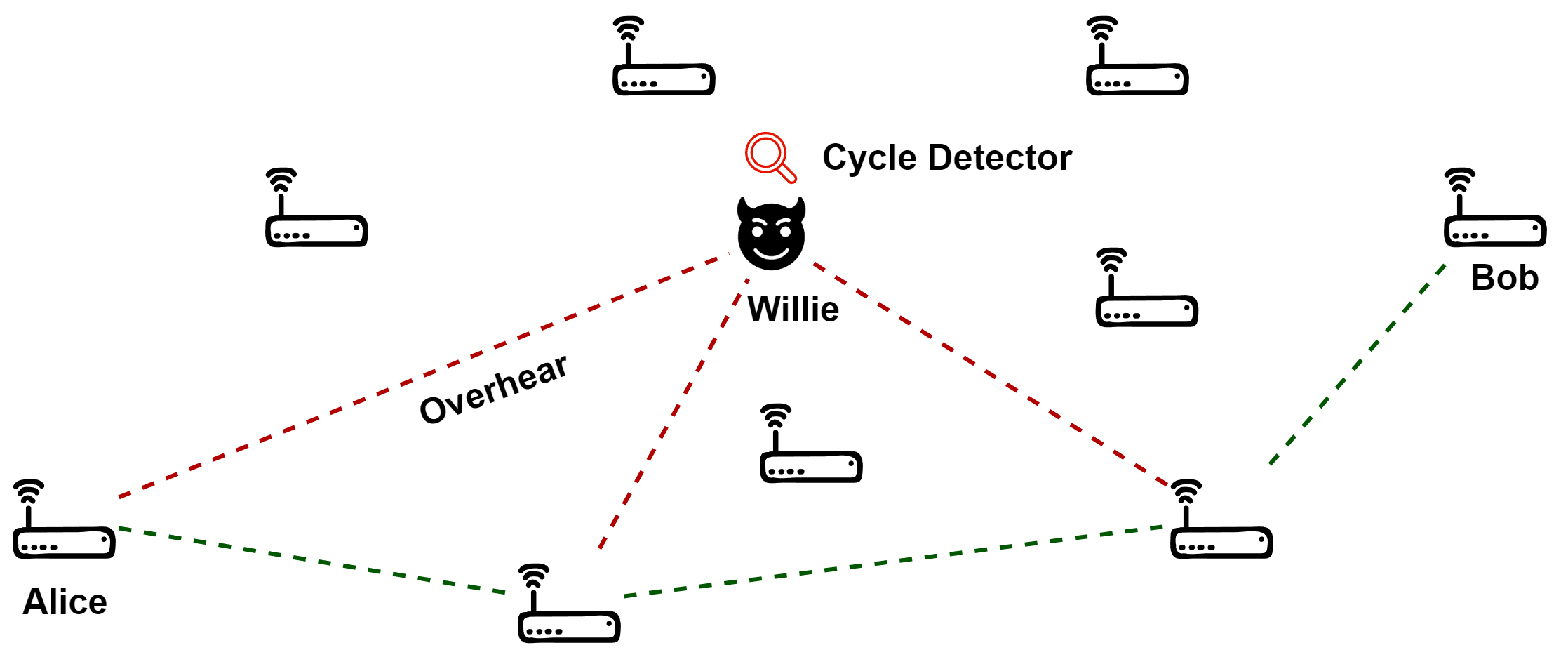}
    \caption{System model of multi-hop network. Each hop uses DSSS signals to communicate with the next hop.}
    \label{fig:system_model}
\end{figure}

In this paper, we examine a wireless covert routing problem comprising Alice, Bob, multiple friendly relaying nodes, and a passive adversary, Willie (Fig.~\ref{fig:system_model}), employing a cycle detector. Alice transmits a confidential message with $M$ bits, modulated by DSSS, to Bob. Alice prefers a lower probability of detection by Willie, with the help of relay nodes whose transmissions satisfy QoS constraints. The message generated by Alice is transmitted hop-by-hop to Bob. Direct communication between Alice and Bob may require high transmitted power due to channel fading and QoS requirements, thereby revealing her communication to Willie. 
Conversely, multi-hop communications may provide more opportunities for the adversary to detect transmissions and/or increase end-to-end message-routing latency. Key contributions are:
\begin{itemize}
    \item We propose two novel routing algorithms for covert multi-hop communication under a cycle detector with QoS guarantees: a covertness-maximization and a latency-minimization approach.

    \item We validated the proposed approach in a realistic ($3D$) environment, revealing that the end-to-end latency  exhibits a sub-linear, linear, and 
    and exponential with an increase in the covertness requirement. 
    Additionally, with an increase in the message size $M$, the route latency exhibits super-linear and exponential growth under relaxed and stringent covertness requirements, respectively.  
    
    \item 
     We also illustrate that, for latency maximization, the adversary may favor the cycle detector at low covert requirements and the energy detector under stringent covert requirements. 
\end{itemize}

\noindent Section~\ref{sec:network_model} gives the legitimate network and signaling model. Section~\ref{sec:covertness_concept} introduces the  per-hop cycle detector.
Section~\ref{sec:perfromace_problem_formulation} formulates the optimization problems. Section~\ref{sec:proposed_sol} describes our proposed method. 
Section~\ref{sec:simulation} gives the numerical results.


\section{Network \& Signal Model}
\label{sec:network_model}

\subsection{Network Model}
As shown in Fig.~\ref{fig:system_model}, Alice can select multiple possible routes to transmit a message to Bob.
We define $\Psi$ as the set of all possible routes. Let us denote $\psi = (v_1, \ldots, v_{N_\psi}) \in \Psi$ as a route from Alice to Bob, where $v_i = (T_{v_i}, R_{v_i})$
is the single-hop communication link between transmitter $T_{v_i}$ and receiver $R_{v_i}$, and $N_\psi$ is the number of hops of the route $\psi$. Here, $T_{v_1}$ and $R_{v_{N_\psi}}$ are Alice and Bob, respectively. Alice generates a message of $M$ bits, which is transmitted sequentially, hop by hop, along the selected route $\psi$ until it reaches Bob. As the analysis of each hop communication is identical, we will drop the hop index $i$ and use $v$ to denote a single-hop communication link. In the network, each of the transmitters $\{ T_{v} \}$ along the path $\psi$ adaptively selects bandwidth $\Omega_v$ and transmitted power $P_v$ based on the dynamic wireless environment, QoS requirements, and covert requirements. There are multiple time slots between any transmitter $T_v$ and $R_v$. Transmitter $T_v$ chooses one of the time slots randomly (discussed in Sec.~\ref{sec:covertness_concept}) and sends the message signal in DSSS format (described in Section~\ref{sec:signal_model}). Willie observes signals from the selected 
hops and decides whether Alice's transmission is present (Willie's detection mechanism is described in Sec.~\ref{sec:willie_detection}). 

\subsection{Single Hop Transmission Model}
\label{sec:signal_model}

In our study, we consider that each node transmitter $T_v$ uses DSSS 
signals to send $M$ bits to the receiver $R_v$ as following. 

\begin{align}\label{equ:dsss_signal}
x_{v}(t) &= \sqrt{P_{v}} \sum_{m=0}^{M-1} b_m 
           \sum_{l_v=0}^{L_v-1} c_{l_v}\, p\!\left(t - mT_{b_v} - lT_{c_v}\right),
\end{align}
where $P_{v}$ denotes the transmit power, $\{b_m\}$ is i.i.d. symbols with equal-probable value $\pm 1$, $T_{b_v}$ is the bit duration, and 
$\{c_{l_v}\}_{l_v=0}^{L_v-1}$ is a bipolar spreading code of length $L_v$. 
The chip duration is $T_{c_v} = T_{b_v}/L_v$, and $p(t)$ is a root raised cosine chip pulse 
with roll-off factor $\beta=1$ and energy $1/L_v$. The signal bandwidth is approximately
$\Omega_v = \frac{1}{T_{c_v}} = \frac{L_v}{T_{b_v}},$
and the uncoded data rate at hop $v$ is $D_v = 1/T_{b_v}$ bps. (For simplicity, in this study, we focus on the uncoded data rate.)
The processing gain is defined as $
\eta_v \triangleq \frac{\Omega_v}{D_v} = L_v,
$ implying that, under perfect despreading, the receiver achieves an SNR gain of $\eta_v$. 
We assume that the intended receiver $R_v$ knows the spreading code 
$\{c_{l_v}\}_{l_v=0}^{L_v-1}$, whereas Willie does not. Accordingly, the SNRs are
\begin{align}\label{equ:SNR}
    SNR_v &= \frac{P_v |h_v^{T_v,R_v}|^2}{N_0 \Omega_v}\, \eta_v, \,\,\,
    SNR_v^{W}   = \frac{P_v |h_v^{T_v,W}|^2}{N_0 \Omega_v}.
\end{align}
Here, $N_0/2$ denotes the power spectral density (PSD) of the additive white Gaussian noise.  $h_v^{T_v,R_v}$ and $h_v^{T_v,W}$ represent the channel gains between the transmitter $T_v$ and its intended receiver $R_v$, and between $T_v$ and Willie, respectively. Finally, the link reliability between $T_v$ and $R_v$ is quantified by the bit error rate (BER) $BER_v = Q\!\left(\sqrt{2 \times SNR_v}\right),$ where $Q(\cdot)$ is the complementary Gaussian distribution.

\section{Covertness Against Cycle Detector}\label{sec:covertness_concept}

\subsection{Background: Cycle Detector}
If a signal has an inherent periodic structure, the Degree of Cyclostationarity (DCS) \cite{gardner1992signal} is an excellent feature for detecting the presence of the signal. The DSSS signal, with the same spreading code repeating, has a higher DCS value than the white noise signal. The DCS metric is calculated as follows. For a periodic autocorrelation function $R_y(t,\tau)=\mathbb{E}{[y(t)y^*(t-\tau)]}$ with period $T$, the Cyclic Autocorrelation Function (CAF) is $R_y^\alpha \triangleq \frac{1}{T} \int_{-T/2}^{T/2} R_y(t,\alpha)e^{-i2\pi\alpha t}dt$ where $\alpha=\frac{k}{T}$, for $k=1,2,3..$ are the cycle frequencies. The Spectral Correlation Function (SCF) of $y(t)$, $S_y^{\alpha}(f)$, is the Fourier Transform of $R_y^{\alpha}(\tau)$. Then the DCS is defined as 

\begin{equation}\label{equ:DCS_cal}
    DCS^y \triangleq \sum_{\alpha \neq 0} \frac{\int_{-\infty}^{+\infty} |S_y^{\alpha}(f)|^2df}{\int_{-\infty}^{+\infty} |S_y^{\alpha=0}(f)|^2df}.
\end{equation}
For a finite segment of signal $y(t)$, $t \in (0,T_0)]$, the SCF is estimated as follows $\hat{S}_{Y}^\alpha(f) = Y_{T_0}\left(f - \frac{\alpha}{2}\right) Y_{T_0}^*\left(f + \frac{\alpha}{2}\right)$,
where $Y_{T_0}(t, f) = \int_{t - T_0/2}^{t + T_0/2} y(u) e^{-i2\pi f u} \, du$
is time-varying finite-time complex spectrum~\cite{gardner1989statistical}.
Plugging the estimate SCF, $\hat{S}_{Y}^\alpha(f)$, in Eq. \eqref{equ:DCS_cal}, we calculate the estimated DCS, $\widehat{DCS}^y.$ 

\subsection{Single Hop Cycle Detection at Willie}\label{sec:willie_detection}
For single link transmission,  at hop $v,$  transmitter $T_v$  chooses to transmit or not with probability $1/2$. We assume that Willie has the knowledge of when the transmission slots start and ends. This assumption corresponds to the worst-case scenario from the perspective of covertness.  Willie selects one of the following two hypotheses:
\begin{align*}
H^0_v &: y^W_v(t) = n^W_v(t), \; t \in (0, T_0), \\
H^1_v &: y^W_v(t) = h^{T_v,W}_{v} x_v(t) + n_v^w(t), \; t \in (0, T_0),
\end{align*}
where $T_0=M T_{b_v}$ is the slot duration; $h^{T_v,W}_{v}$ is the channel gain form transmitter $T_v$ to Willie;
$H_v^0$ is the no transmission null hypothesis; and $H_v^1$ denotes the alternate hypothesis when the transmission between $T_{v}$ and $R_{v}$ is present. To detect the presence of the hop transmission, Willie estimates the strength of the cyclic feature by calculating the DCS value. The white noise does not have cyclic features, whereas the DSSS signal at Eq.~\eqref{equ:dsss_signal} does. The strength of cyclic properties can be quantified via Degree of Cyclostationarity (DCS)\cite{gardner1988cyclic,aggarwal2025covertness}.
Willie compares the statistics $\widehat{DCS}^W_v \underset{{\mathcal{D}}^0_{v}}{\overset{{\mathcal{D}}^1_{v}}{\gtrless}} \delta_{v},$
where $\widehat{DCS}^W_v$ is the estimated DCS of  $y_v^W(t)$ and $\delta_v$ is the decision threshold.
$\mathcal{D}^0_{v}$ and $\mathcal{D}^1_{v}$ denote decisions in favor of $H_v^0$ and $H_v^1$. 
The detection error probability (DEP) at Willie for link $v$ is $ DEP_v  = (P^{MD}_v + P^{FA}_v)$,
where $P^{\text{MD}}_{v} \triangleq \mathbb{P}\left(\mathcal{D}^0_{v} \mid H^1_v \right)$  and $P^{\text{FA}}_{v} \triangleq P\left(\mathcal{D}^1_{v} \mid H^0_v \right)$ are the probabilities of miss detection and false alarm, respectively. Further, we omit the prior probability factor $.5$  without loss of generality. 


\label{sec:min_DEP_Willie}
Willie's goal is to minimize each hop's detection error probability, $DEP_v$, by selecting the optimum DCS threshold $\delta_V^{opt}$. It selects the threshold ~\cite{aggarwal2025covertness} : 
\begin{equation}\label{equ:delta_opt}
    \delta_v^{opt} = \arg\min_{\delta_v} \; DEP_v (SNR_v^W).
\end{equation}
where $SNR_v^W$ is the SNR at Willie when the signal is present.
Here, we assume that Willie has the perfect knowledge of all cycles in $x_v(t)$,  and the $SNR_v^W$ (when present and when not) so that it can compute the $DEP_v$ (worst-case scenario from a covertness perspective). Following \cite{aggarwal2025covertness}, the $\delta_v^{opt}$ is obtained via exhaustive search due to the lack of an efficient analytical approach. 
In \cite{aggarwal2025covertness}, we argue that $DEP_v \propto \frac{1}{SNR_v^W}$, which is due to the fact that increased SNR at Willie both ensures increased {\em distance} between the two hypotheses measured via the DCS as well as an increased accuracy in estimating DCS. We also show that DEP decreases with observation time, i.e., $DEP_v \propto \frac{1}{M}$. In the above, $\propto$ indicates a monotonic non-decreasing relationship between parameters and statistics, which we exploit in this paper.

\section{Performance Metrics \& Problem Formulation}\label{sec:perfromace_problem_formulation}

We aim to study two routing optimization problems in which Alice sends $M$ bits to Bob. The first one is 
how covertly the $M$ bits reach Bob from Alice; 
end-to-end covert maximization. The second one is how quickly $M$ bits can be transferred to Bob, with a focus on minimizing end-to-end latency. For both routing optimization problems, we impose constraints to ensure QoS requirements. We will first quantify the performance metrics for each problem and then provide an explicit description of our problem formulation. Here, for quantifying end-to-end covertness, we assume that Willie has complete knowledge of the optimized route from Alice to Bob, $\psi = (v_1, v_2, \ldots, v_{N_\psi}) \in \Psi$. This represents the worst-case condition for covert communication and the most favorable case for Willie.
\subsection{Performance Metrics}
\textit{End-to-end Covertness $DEP(\psi)$}: We define the end-to-end covertness of a route $\psi$ as follows. Based on the definition of covertness in Sec.~\ref{sec:willie_detection}, each hop's transmission is independent from others. If any hop $v \in \psi$ is \emph{weak} in covertness and cannot {\em hide} communication, the end-to-end covertness collapses, and Willie can detect Alice's transmissions. Thus, end-to-end covertness of the route $\psi$ is constrained by the weakest link, i.e., $DEP(\psi) = \min_{v \in \psi} DEP_v$.\\
\textit{End-to-end Latency $\lambda(\psi)$}: For a single hop $v$, we define the  communication latency $\lambda_v$ as the expected  time to send $M$ bits to the next hop, i.e., $\lambda_v=\frac{M}{D_v}$ where $D_v$ is the link data rate (uncoded). The end-to-end/total latency for the route $\psi$ is the sum of all the link latencies, $\lambda(\psi)=\sum_{v \in \psi} \lambda_v$. We note that this formulation provides a simplified estimate of overall latency, as it ignores queuing delays and protocol overheads of hop-to-hop data transmission. 

\subsection{End-to-End Covertness Maximization}
\label{sec:max_covert_route}

The end-to-end covertness can be maximized by maximizing the weakest DEP in the route $\psi:$
\begin{equation}\label{equ:op_ete_covert}
\begin{aligned}
\max_{P_{v}, \eta_{v}, \psi \in \Psi} \;    
& \min_{v \in  \psi} \;  DEP_v  , \\
\text{s.t.} \quad & D \geq D^{\mathrm{reqd}} , \\
[BER_v,\Omega_v,P_v] \ \preceq & \; [BER^{reqd},\Omega^{max},P^{max}]
\end{aligned}
\end{equation}
Here, $D^{reqd}$ is the (average) required link data rate;  $BER^{reqd}$, the bit error rate,  ensures link reliability; 
$\Omega^{max}$ and $P^{max}$ are the bandwidth and transmitted power budgets of a single link. 
\subsection{End-to-End Latency Minimization}
 We minimize the overall latency 
\begin{equation}\label{equ:op_ete_latency}
\begin{aligned}
\min_{P_{v}, \eta_{v}, \psi \in \Psi} \; & \sum_{v \in \psi} \lambda_v , \\
\text{s.t.} \quad & DEP_v \geq DEP^{\mathrm{reqd}} , \\
[BER_v,\Omega_v,P_v] \ \preceq & \; [BER^{reqd},\Omega^{max},P^{max}]
\end{aligned}
\end{equation}
under the constraint that each link satisfies the common required covertness $DEP^{reqd}$ as well as the above-mentioned QoS and link budget constraints.


\section{Proposed Solution}
\label{sec:proposed_sol}
\blue{Based on the system model from Sec.~\ref {sec:network_model}, metrics (and constraints) for each hop in a route  are independent. Thus, end-to-end optimality can be achieved by first achieving link optimality, followed by optimum route selection. In the next section, we will first describe our approach for single-link optimization, followed by optimal route selection.}
\subsection{Single Hop Covert Maximization}\label{sec:single_hop_covert}

\begin{equation}\label{equ:link_covert}
\begin{aligned}
\max_{P_{v}, \eta_{v}} \; &  DEP_v , \\
\text{s.t.} \quad & D_v \geq D^{ \mathrm{reqd}} , \\
[BER_v,\Omega_v,P_v] \ \preceq & \; [BER^{reqd},\Omega^{max},P^{max}]
\end{aligned}
\end{equation}
Solution of~\eqref{equ:link_covert} is detailed in our previous work~\cite{aggarwal2025covertness}. Here, we summarize the results in the context of this study. 
Eq.~\eqref{equ:link_covert} is maximized by the optimum spreading gain  $\eta^{opt} =\Omega^{max}/D^{reqd}$  and the link transmission power 
\begin{equation}
P_v^{opt} = \frac{SNR^{reqd}}{|h_v^{T_v,R_v}|^2}\cdot\frac{N_0\Omega_{max}}{\eta^{opt}}.
\end{equation}
both  achieved by selecting maximum bandwidth $\Omega_v^{opt}=\Omega^{max}$ and  the minium required link rate $D_v^{opt}=D^{reqd}$.  Here, SNR of Bob's received signal, $SNR^{reqd},$ ensures that the  reliability constraint, $BER^{reqd},$
common to all links, is satisfied with equality.
The maximum $DEP_v^{max}$ is achieved by the minimum detection SNR at Willie, $SNR_v^{W,min}$ which maximizes the detection SNR gain $\theta_v \triangleq \frac{SNR_v}{SNR_v^W}$ and achieves
\begin{equation}
    \theta_v^{max} = \frac{SNR^{reqd}}{SNR_v^{W,min}} = \frac{|h_v^{T_v,R_v}|^2}{|h_v^{T_v,W}|^2} \times \eta^{opt}. 
\end{equation}
\subsection{Single Hop Latency Minimization}
Single-hop latency minimization requires
\begin{equation}\label{equ:link_latency}
\begin{aligned}
& \min_{P_{v}, \eta_{v}} \; \lambda_v , \\
\text{s.t.} \quad & DEP_v \geq DEP^{\mathrm{reqd}} , \\
[BER_v,\Omega_v,P_v] \ \preceq & \; [BER^{reqd},\Omega^{max},P^{max}].
\end{aligned}
\end{equation}
The link covertness  and reliability requirements impose  the opposing  SNR constraints  $SNR^{W}_v \leq SNR_v^{W,max}$ and 
$SNR_v \geq SNR_v^{reqd}$ at Willie and Bob, respectively. Together they require that the link transmitted power $P_v$ satisfies 
\begin{equation}
\label{equ:P_v_lat_min}
      \frac{SNR_v^{req}} {|h_v^{Tv,Rv}|^2}\cdot D_v   \leq \frac{P_v}{N_0 \Omega}  \leq      \frac{SNR_v^{W,max}} {|h_v^{Tv,W}|^2}.
\end{equation}
And, hence,
\begin{equation}
   \lambda_v 
  \ge M\,\frac{SNR_v^{\mathrm{req}}}{|h_v^{T_v,R_v}|^{2} P_v}\,N_0 
  \ge M\,\frac{SNR_v^{\mathrm{req}}}{SNR_v^{W,\max}}
       \frac{|h_v^{T_v,W}|^{2}}{|h_v^{T_v,R_v}|^{2}}
       \frac{1}{\Omega}.
       \label{equ:L_v_lat_max} 
\end{equation}


Hence, the maximum  power and minimum link  latency,  
\begin{align}
P_v^{\text{max,reqd}}
  &= \frac{\mathrm{SNR}_v^{W,\max}}{\lvert h_v^{T_v,W}\rvert^2}
     \, N_0 \, \Omega^{\max}, \label{eq:Pmax} \\[6pt]
\lambda_v^{\min}
  &= M \frac{\mathrm{SNR}_v^{\mathrm{reqd}}}{\mathrm{SNR}_v^{W,\max}}
     \frac{\lvert h_v^{T_v,W}\rvert^2}{\lvert h_v^{T_v,R_v}\rvert^2}
     \frac{1}{\Omega^{\max}}. \label{eq:lambda_min}
\end{align}
ensure all constraints are satisfied.
The following lemma directly follows the above discussion and  inequality~\eqref{equ:L_v_lat_max}. 
\footnote{For simplicity without loss of generality we assume that $P_v^{max,reqd}\leq P^{max}$ and that the spreading gain $\eta_v$ is not necessarily an integer.}
\begin{lemma}\label{thm:omega_processing_gain}
Let M be the packet size.
    The solution of~\eqref {equ:link_latency} is the latency $\lambda_v^{min},$ achievable with  bandwidth $\Omega^{max},$ transmit power $P_v^{max,reqd},$   and  spreading gain $\eta_v^{opt}=\Omega^{max} \lambda_v^{min}/M.$  
\end{lemma}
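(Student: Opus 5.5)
The argument has three parts: a lower bound on $\lambda_v$ that holds for every feasible choice, a check that the stated triple is feasible, and a verification that it attains the bound.

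\emph{Step 1 (reduction to SNR constraints).} First translate the constraints of~\eqref{equ:link_latency} into SNR constraints. Since $BER_v=Q(\sqrt{2\,SNR_v})$ is strictly decreasing in $SNR_v$, the reliability constraint is equivalent to $SNR_v\ge SNR_v^{reqd}$. The covertness constraint is handled through the monotonicity $DEP_v\propto 1/SNR_v^W$ from Sec.~\ref{sec:min_DEP_Willie}, at fixed $M$. It is equivalent to $SNR_v^W\le SNR_v^{W,\max}$, where $SNR_v^{W,\max}$ is the largest Willie SNR for which Willie's optimized $DEP_v$ in~\eqref{equ:delta_opt} still meets $DEP^{reqd}$. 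This step is the main point requiring care. If $DEP_v$ is only non-increasing rather than strictly decreasing, I would define $SNR_v^{W,\max}$ as a supremum and assume the constraint set is closed, so that the supremum is attained.

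\emph{Step 2 (lower bound).} Substitute~\eqref{equ:SNR} with $\eta_v=\Omega_v/D_v$. Bob's constraint becomes $P_v|h_v^{T_v,R_v}|^2/(N_0 D_v)\ge SNR_v^{reqd}$, and Willie's becomes $P_v|h_v^{T_v,W}|^2/(N_0\Omega_v)\le SNR_v^{W,\max}$. Together they give the sandwich~\eqref{equ:P_v_lat_min}. Since $\lambda_v=M/D_v$, Bob's constraint yields the first inequality in~\eqref{equ:L_v_lat_max}. Eliminating $P_v$ with Willie's constraint yields the second. The resulting bound decreases in $\Omega_v$, so applying $\Omega_v\le\Omega^{\max}$ gives $\lambda_v\ge\lambda_v^{\min}$ of~\eqref{eq:lambda_min} for every feasible point.

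\emph{Step 3 (achievability).} Set $\Omega_v=\Omega^{\max}$ and $P_v=P_v^{max,reqd}$ from~\eqref{eq:Pmax}. Willie's constraint then holds with equality. The footnote assumption $P_v^{max,reqd}\le P^{\max}$ gives the power budget. Next choose $D_v=M/\lambda_v^{\min}$, which makes Bob's constraint tight. Substituting confirms $SNR_v=SNR_v^{reqd}$, hence $BER_v=BER^{reqd}$. The corresponding spreading gain is $\eta_v=\Omega^{\max}/D_v=\Omega^{\max}\lambda_v^{\min}/M$. This value is admissible because $\eta_v$ need not be an integer, again by the footnote. The point is therefore feasible and attains the lower bound, which proves the lemma.
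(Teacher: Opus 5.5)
Your proposal is correct and follows the paper's own argument. Like the paper, you turn reliability and covertness into the opposing SNR bounds at Bob and Willie, sandwich $P_v$, eliminate it to get~\eqref{equ:L_v_lat_max}, push $\Omega_v$ to $\Omega^{\max}$, and check that $(\Omega^{\max},P_v^{max,reqd},\eta_v^{opt})$ makes both SNR constraints tight; your explicit achievability check and the closedness caveat for $SNR_v^{W,\max}$ add rigor the paper leaves implicit. One small caution: your two separately derived inequalities are right, but the display~\eqref{equ:P_v_lat_min} you cite has a normalization slip (its left end should read $SNR_v^{reqd}D_v/(|h_v^{T_v,R_v}|^2\Omega_v)$), so derive the latency bound from your own inequalities rather than from that display.
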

\noindent{\bf Remark:}  The optimal transmit powers are determined by the minimum detection SNR  required for link reliability at Bob's, $SNR_v^{reqd},$ and by the maximum SNR still ensuring link covertness at Willie's,  $SNR_v^{W,max},$ for the  covertness and latency 
optimization problems, respectively.  
The optimal spreading gains are determined by the required link rate,  $D^{reqd},$ and  the minimum {\em reliable and covert} latency, $\lambda_v^{min},$ for the  covertness and latency 
optimization problems, respectively.  
\label{sec:single_hop_latency}
\subsection{Covert Routes: Covertness  vs Latency Optimization}
The  {\em route that maximizes covertness} is the solution of:
\begin{equation}\label{equ:max_min_dep}
\begin{aligned}
\max_{\psi \in \Psi} \;    
 \min_{v \in  \psi} \; & DEP_v^{max}.
\end{aligned}
\end{equation}
We solve this problem using a Dijkstra-based widest path algorithm~\cite{medhi2017network}, which maximizes the minimum edge weight along the route by replacing the path-length metric with a bottleneck distance metric. 
Consequently, the edge with the lowest weight 
has to be found first. As mentioned in  Sec.~\ref{sec:min_DEP_Willie}, the $DEP_v$ computation is based on the empirical distributions under the null and alternative hypotheses obtained via Monte Carlo simulation.
which is computationally intensive.
Instead, here, we propose 
to solve  the equivalent problem 
\begin{equation}\label{equ:max_min_theta}
\begin{aligned}
\max_{\psi \in \Psi} \;    
 \min_{v \in  \psi} \; & \theta_v^{max},
\end{aligned}
\end{equation}
based on the detection SNR gain parameter $\theta_v.$ 
\begin{theorem}\label{thm:equivalent_problem}
   The routes  \eqref{equ:max_min_dep} and \eqref{equ:max_min_theta} are equivalent.
\end{theorem}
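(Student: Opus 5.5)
The argument rests on one structural fact: for every link $v$, $DEP_v^{max}$ is a non-decreasing function of $\theta_v^{max}$, and this function is the same for all links. Granting that, the two bottleneck problems \eqref{equ:max_min_dep} and \eqref{equ:max_min_theta} have the same maximizers.

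\emph{Step 1: a common monotone map.} From Sec.~\ref{sec:willie_detection}, with $\delta_v^{opt}$ chosen as in \eqref{equ:delta_opt}, Willie's error probability depends on the link only through the detection SNR $SNR_v^W$ and the observation length $M$. The length $M$ is the same on every hop. The DEP is non-increasing in $SNR_v^W$; this is the relation $DEP_v\propto 1/SNR_v^W$ quoted from \cite{aggarwal2025covertness}. So there is a single non-increasing function $g$ with $DEP_v^{max}=g(SNR_v^{W,min})$ for all $v$.

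By Sec.~\ref{sec:single_hop_covert}, every link operates with $SNR_v=SNR^{reqd}$, and $SNR^{reqd}$ is common to all links because $BER^{reqd}$ is. Hence
\begin{equation}
SNR_v^{W,min}=SNR^{reqd}/\theta_v^{max}.
\end{equation}
Setting $f(\theta)\triangleq g(SNR^{reqd}/\theta)$ gives $DEP_v^{max}=f(\theta_v^{max})$ with $f$ non-decreasing and independent of $v$. This is the step I expect to be the main obstacle. The monotonicity of $g$ is only asserted empirically in the cited work, so it must be stated as an assumption. One also has to check that $g$ really is link-independent, that is, that the spreading code and cycle structure enter identically on each hop. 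This holds because $\eta^{opt}=\Omega^{max}/D^{reqd}$ is the same for all links.

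\emph{Step 2: monotone maps commute with the bottleneck.} For any route $\psi$, a non-decreasing $f$ satisfies
\begin{equation}
\min_{v\in\psi} f(\theta_v^{max}) = f\Big(\min_{v\in\psi}\theta_v^{max}\Big).
\end{equation}
To see this, note that the minimizing $\theta$ attains the smallest $f$-value: the right side bounds the left from below, and it is achieved at the argmin link. Hence the objective of \eqref{equ:max_min_dep} equals $f$ applied to the objective of \eqref{equ:max_min_theta}, route by route.

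\emph{Step 3: equality of optimal routes.} Let $\psi^\ast$ maximize \eqref{equ:max_min_theta}. For every $\psi$, $\min_{v\in\psi}\theta_v^{max}\le \min_{v\in\psi^\ast}\theta_v^{max}$. Applying $f$ to both sides and using Step 2 gives $\min_{v\in\psi}DEP_v^{max}\le\min_{v\in\psi^\ast}DEP_v^{max}$, so $\psi^\ast$ is optimal for \eqref{equ:max_min_dep}.

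The converse is clean if $f$ is strictly increasing, which is the natural reading of "$\propto$" in the relevant SNR range. If $f$ is merely non-decreasing, the DEP problem may have extra ties. In that case I would state the theorem as: every $\theta$-optimal route is DEP-optimal, and the two optimal values correspond via $f$. This is exactly what licenses running the widest-path Dijkstra on the $\theta_v^{max}$ weights.
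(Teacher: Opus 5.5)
Your proof is correct and follows the paper's argument: the common $SNR^{reqd}$ on every link together with $DEP_v \propto 1/SNR_v^W$ makes $DEP_v^{max}$ a link-independent non-decreasing function of $\theta_v^{max}$, and the bottleneck objectives then transfer from one problem to the other. The paper finishes by asserting that ordering hops by $DEP_v$ also orders them by $\theta_v$, which needs strict monotonicity or a suitable tie-breaking. Your identity $\min_{v} f(\theta_v^{max}) = f(\min_v \theta_v^{max})$ and your caveat about ties instead give the one-directional conclusion that actually holds when $f$ is merely non-decreasing.
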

\begin{proof}
The proof relies on $\theta_v \propto DEP_v$  in problem \eqref{equ:link_covert} (see~\cite{aggarwal2025covertness} and~\ref{sec:single_hop_covert}).
The reliability constraint in~\eqref{equ:link_covert} requires that the optimal detection
SNR at Bob's is identical
($SNR_v^ \equiv SNR^{reqd}$) 
for all links and, additionally,   $DEP_v \propto \frac{1}{SNR_v^W}$ (see Sec.~\ref{sec:min_DEP_Willie}).
Consequently,
$DEP_v$ is a non-decreasing function of $\theta_v.$
Now, let $(v_1,\ldots,v_N)$ index the hops ordered according to
\[
\mathrm{DEP}_{v_1}\;\le\;\mathrm{DEP}_{v_2}\;\le\;\cdots\;\le\;\mathrm{DEP}_{v_N}.
\]
Then,  the
corresponding gains are ordered as
\[
\theta_{v_1}\;\le\;\theta_{v_2}\;\le\;\cdots\;\le\;\theta_{v_N}.
\]
Hence, the equivalence of the two problems follows.
\end{proof}

The {\em covert route that minimizes  latency} is the solution of 
\begin{equation}\label{equ:min_lat_route}
    \min_{\psi \in \Psi} \sum_{v \in \psi} \lambda_v^{min},
\end{equation}
and is solved using Dijkstra’s shortest path algorithm.
\section{Numerical Results}
\label{sec:simulation}
In this section, we present numerical simulation results to validate the system's performance under our proposed solution\footnote{The simulation code is available at https://github.com/swapnil-saha/Covert-Routing-with-DSSS-Signaling-Against-Cycle-Detectors.git}. We perform our joint resource allocation and routing within an emulated $ 3D$-propagation environment. Unless otherwise specified, all simulation results are obtained using the following parameter settings: maximum available bandwidth is $\Omega_{\max} = 10~MHz,$ 
the noise spectral density is $N_0 = -113~\mathrm{dBm/Hz}$, and message size $M=100~Mb$. We choose $BER^{reqd} =Q(\sqrt{2\times10}) \xrightarrow{} SNR^{reqd}=10~\mathrm{dB}$ as constraint   for reliable communication.\\
\begin{figure}[t]
    \centering
    \includegraphics[width=0.65\linewidth]{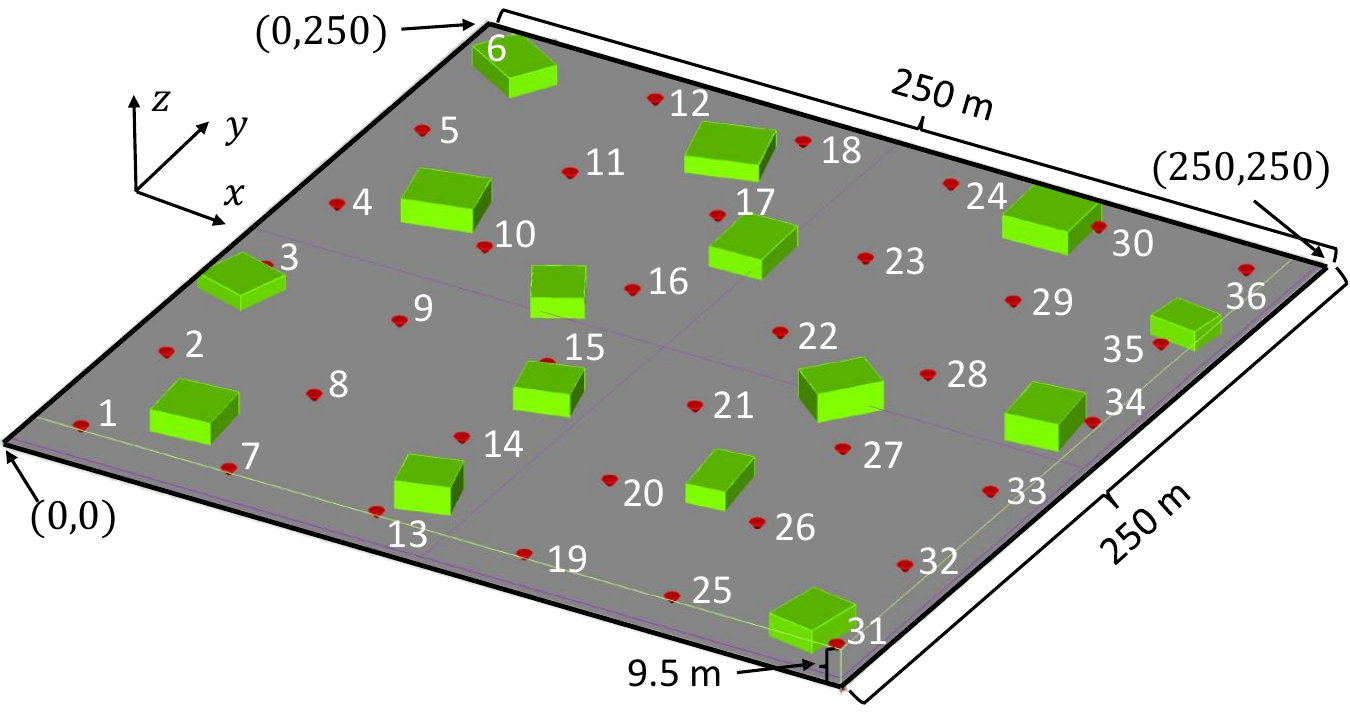}
    \caption{Network topology.}
    \label{fig:sim_model}
    
    \includegraphics[width=0.75\linewidth]{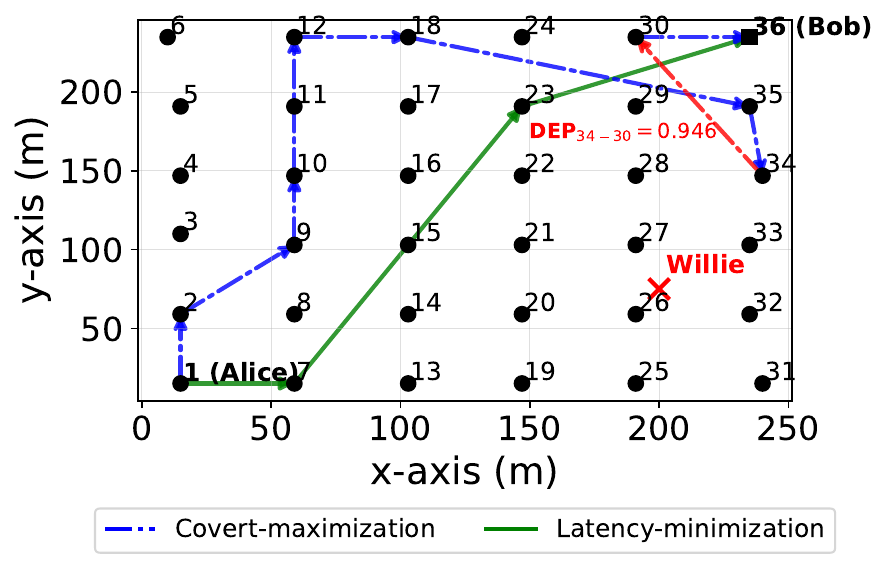}
    \caption{Optimized routing.}
    \label{fig:routing}
\end{figure}
As a network topology (Fig.~\ref{fig:sim_model}), we use an emulated $3D$  environment $(250 \times 250 \times 9.5) \; m^3$ with $36$ nodes (shown in red cones). The green cuboids indicate concrete buildings. We use a ray-tracing approach \cite{EMAGTech} to obtain channel data, assuming a center frequency of $900~MHz$ for the transmitted signal. Readers are referred to our previous work \cite{kong2022covert} for details on the mechanism for obtaining the channel gains for this network topology. In our simulation, we assume that Alice is at node $1$, Bob is at node $36$, and Willie is at the position between node $27$ and node $26$ (Fig.~\ref{fig:routing}).


\subsection{Optimal Covert Routes}
\label{sec:sim_opt_route}
Fig.~\ref{fig:routing} illustrates the optimized routing paths obtained from the solutions of \eqref{equ:op_ete_covert} and \eqref{equ:op_ete_latency}. For the covert maximization problem~\eqref{equ:op_ete_covert}, each link’s required throughput is set to $D_v^{\mathrm{reqd}}=2.5~\mathrm{Mbps}$, while for the latency minimization problem ~\eqref{equ:op_ete_latency}, the required link covertness is fixed at $DEP_v^{\mathrm{reqd}}=0.85$. For the covertness-maximizing route, the link covertness values 
are $[0.980, 0.979, 0.971, 0.981, 0.980, 0.960, 0.950,\textbf{0.946},0.949]$, yielding an end-to-end covertness of $DEP(\psi)=0.946$. For the latency-minimizing route, the total time to transmit a message of size $M=100~\mathrm{Mb}$ is $30~\mathrm{s}$. Two main observations can be drawn: (i) Transmitters located closer to Willie achieve lower link covertness than those that are far away: nodes $30$, $35$, and particularly $34$ (the bottleneck hop shown by the red dotted link). (ii) The latency-minimizing route employs fewer hops and lies closer to Willie. This is intuitive: in the covertness-maximization formulation, the optimization prioritizes per-link covertness, whereas in the latency-minimization formulation, the relaxed covertness constraint permits higher transmit power, enabling longer transmission distances between hops and thus reducing end-to-end delay. Moreover, optimizing routes of both problems ~\eqref{equ:max_min_dep} and \eqref{equ:max_min_theta} provides the same optimum routing (not shown in the figure), confirming the claim of Theorem~\ref{thm:equivalent_problem}.

\subsection{Impact of Link Covertness Requirement on Route Latency}\label{sec:impact_of_constraint}
\begin{figure}[t]
    \centering
        \subfloat[]{%
        \includegraphics[width=0.8\linewidth]{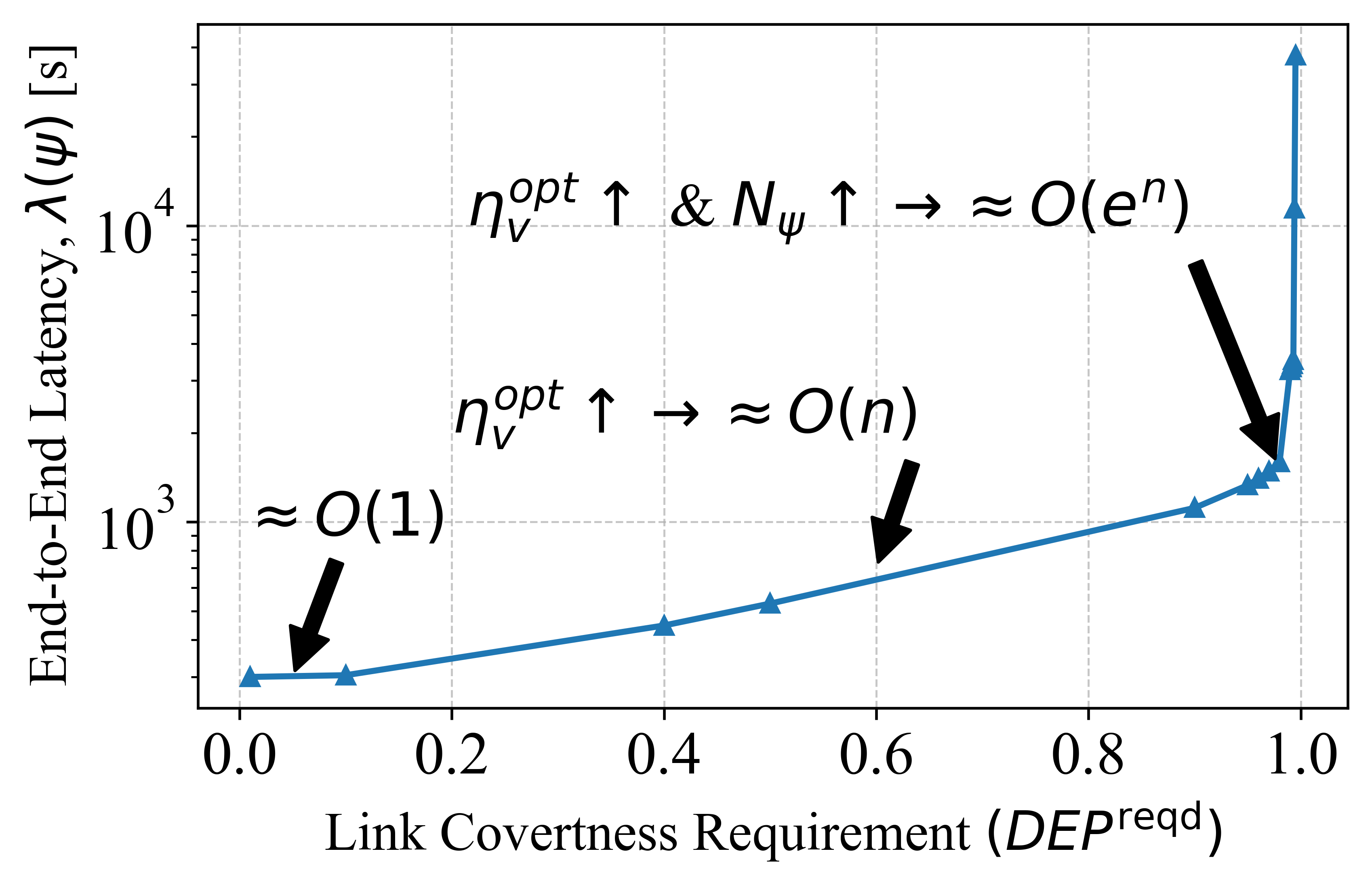}%
        \label{fig:lat_vs_dep}}
    \vspace{0.3em}
    \subfloat[]{%
        \includegraphics[width=0.5\linewidth]{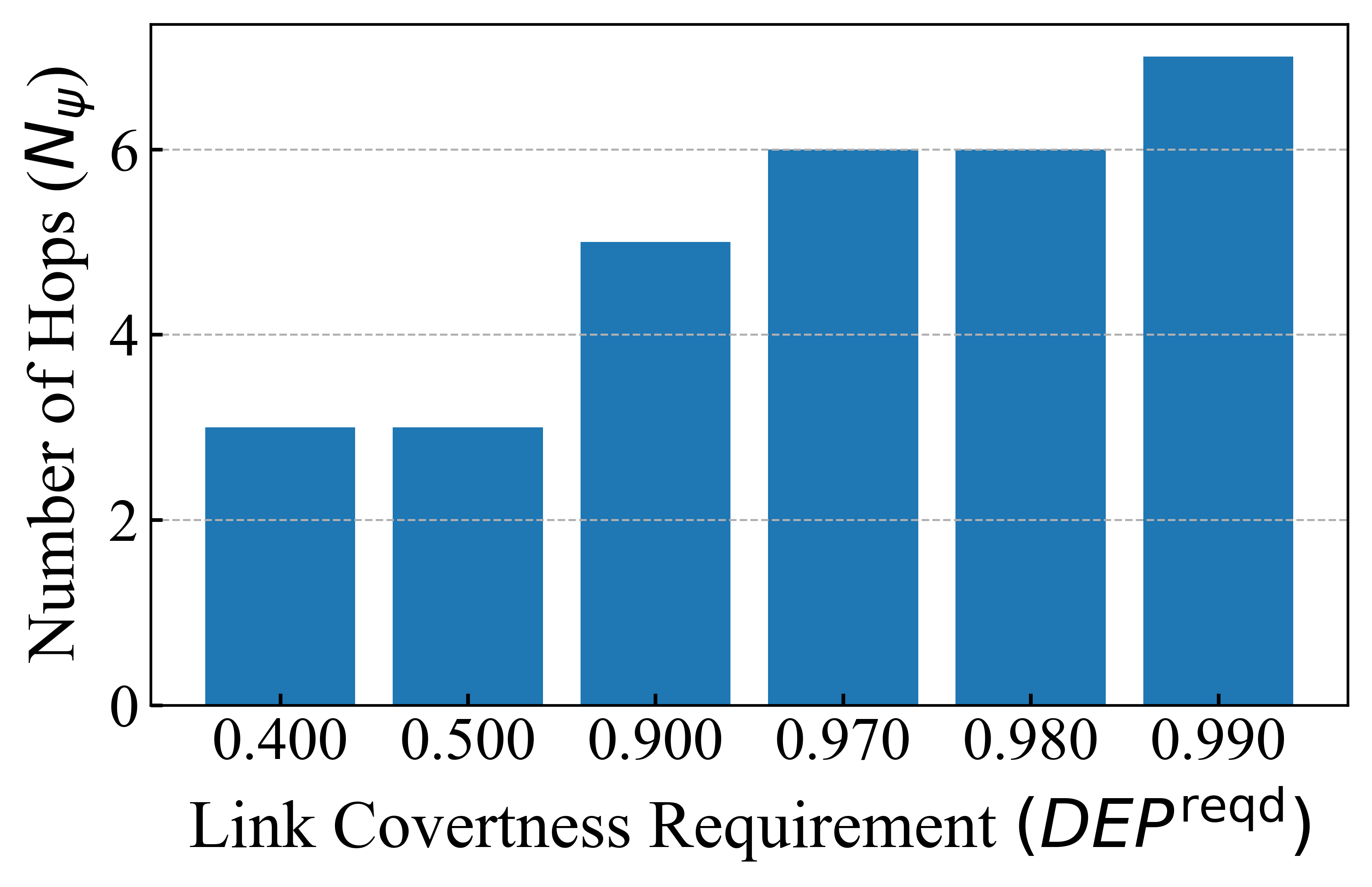}%
        \label{fig:hop_vs_dep}}
    \hfill
    \subfloat[]{%
        \includegraphics[width=0.5\linewidth]{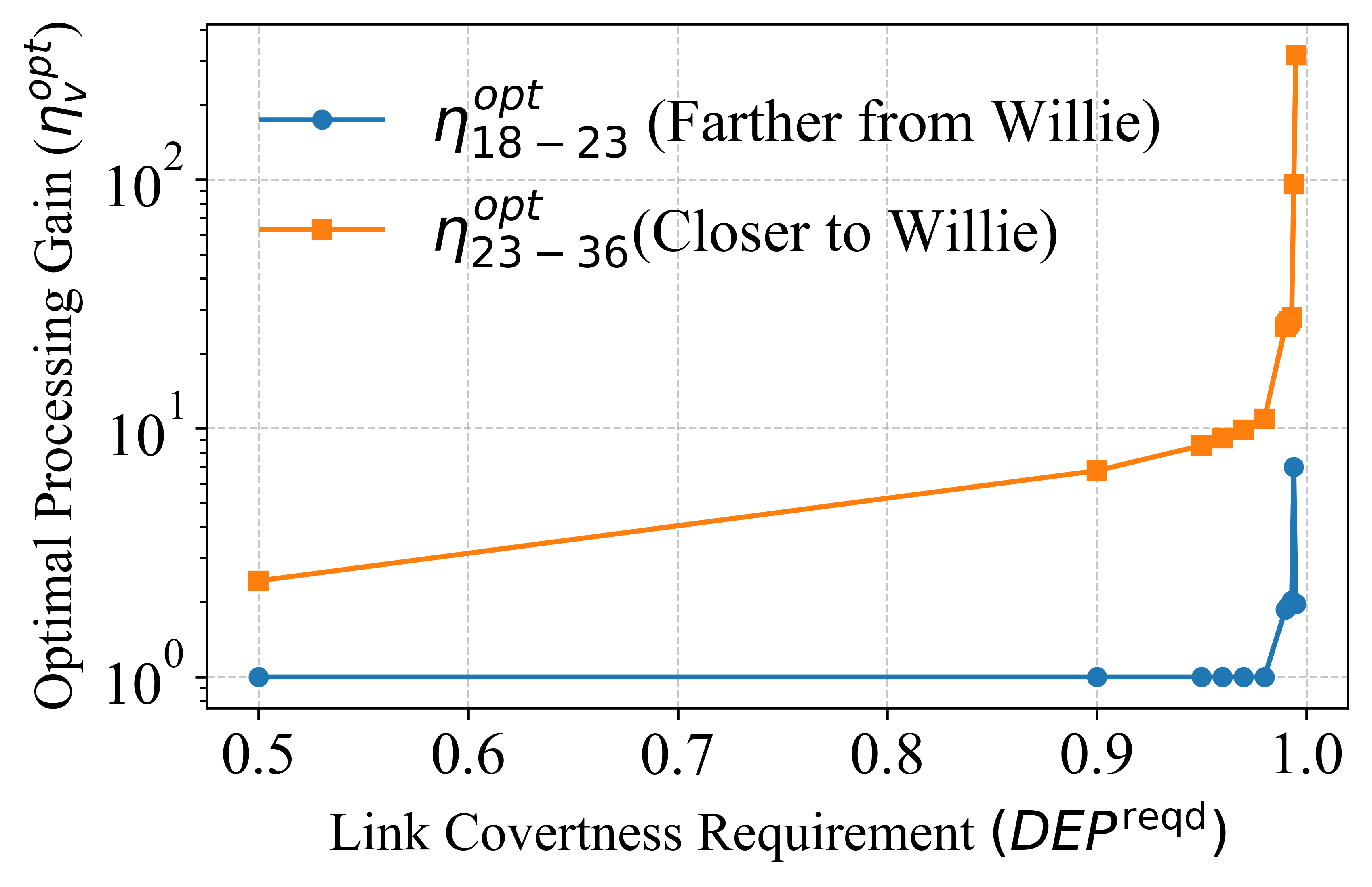}%
        \label{fig:eta_vs_dep}}
    \caption{End-to-end latency versus link covertness requirement, showing a transition from constant $O(1)$ to linear $O(n)$ and near-exponential growth as stricter covert constraints force increases in both the hop number and processing gain.}
    \label{fig:link_covertness_on_latency}
\end{figure}
Fig.~\ref{fig:link_covertness_on_latency} 
illustrates, for our network topology and requirements, the following results: for $DEP^{reqd} < 0.1$, latency remains nearly constant; for $0.1 \leq DEP < 0.97$, latency increases approximately $O(n)$; and for $DEP \geq 0.97$, it exhibits exponential growth. The increase of route latency is triggered by two factors: increasing the required number of hops $N_{\psi}$ and increasing the optimal processing gain $\eta_v^{opt}$ of hops close to Willie. The intuitive explanations of this behavior are as follows. A low covertness requirement allows higher transmission power at each node, thereby enabling near-direct communication between Alice and Bob with fewer hops (See Fig.~\ref{fig:hop_vs_dep}).  
At low covertness requirement, the optimal processing gain remains constant $\eta_v^{opt} \approx 1$, where hop $v$ is closer to Alice  (See Fig.~\ref{fig:eta_vs_dep}). Since $D_v \propto \frac{1}{\eta_v},$
latency also remains constant. 
The increase in the link covertness requirement has two significant effects. The optimized route requires more hops (with a closer proximity to Willie) {\em and} the required/optimum per hop processing gain $\eta_v^{opt}$ increases. For a moderate covert requirement, the latency exhibits approximately linear growth as the optimal transmission parameter $\eta_v^{\mathrm{opt}}$ increases (see Fig.~\ref{fig:eta_vs_dep}). Consequently, once a certain threshold $DEP^{reqd}$ is crossed, the minimum achievable latency grows exponentially.

\subsection{DCS  vs Energy Metric}
\begin{figure}[t]
  \centering
\subfloat[$DEP^{reqd}=10^{-2}$]{%
\includegraphics[width=0.70\columnwidth]{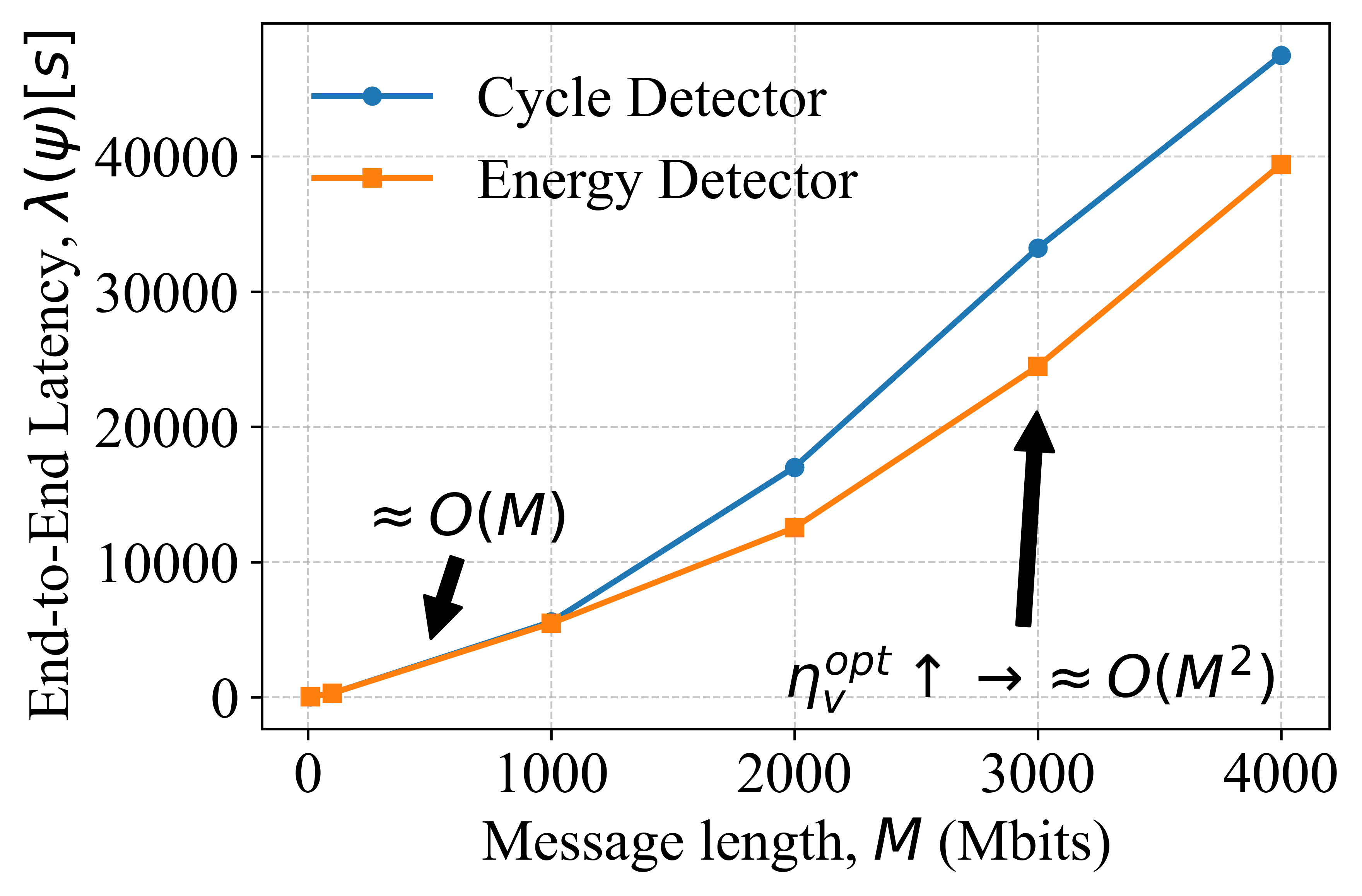}%
\label{fig:low_DEP}}
  \hfill  \subfloat[$DEP^{reqd}=0.8$]{%
\includegraphics[width=0.70\columnwidth]{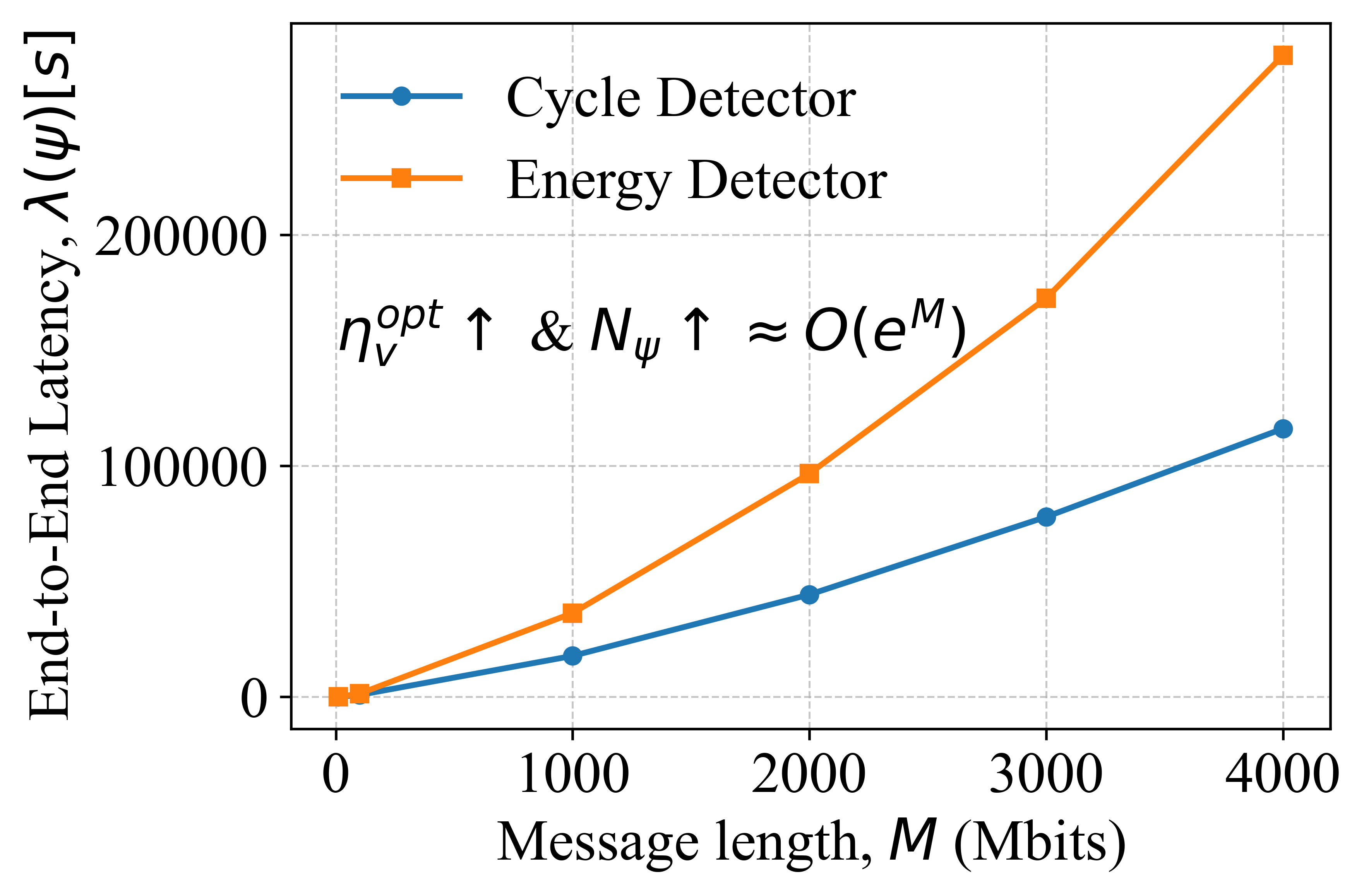}%
\label{fig:high_DEP}}
  \caption{End-to-end latency of cycle and energy detectors versus message length under relaxed and stringent covert requirements. For low $DEP$, the cycle detector imposes routes with higher latency than the energy detector and vice versa.}
\label{fig:lat_through_M}
\end{figure}
Here, we quantitatively compare the energy and cycle detectors in the context of the latency minimization problem (Fig.~\ref{fig:lat_through_M}). The energy detector implementation follows our prior work~\cite{kong2022covert}. We vary two key system parameters: the link covertness requirement $DEP^{reqd}$ and the message length $M$. The results reveal two main observations:
$(i)$ For latency minimization, the system performance decreases with the message length $M$;  
$(ii)$ To maximize the best route's latency, Willie would select the cycle detector when Bob's covertness requirement is low and the energy detector when it is high. 
The explanation closely follows that of our prior study~\cite{aggarwal2025covertness}. Since $DEP_v \propto \frac{1}{M}$ (Sec.~\ref{sec:willie_detection}), the transmit power $P_v$ needs to decrease to guarantee the required link covertness as $M$ increases. 
For the lower link covertness requirement (Fig. \ref {fig:low_DEP}), the optimal route follows an approximately direct path from Alice to Bob.
For $M \leq 100~Mbs$, optimal processing gain is constant and $\eta_v^{opt}\approx 1$ 
and, thus, latency increase is $O(M)$. 
For $M > 1000~Mbs$, 
the processing gain increases in a linear fashion with $M$, and, consequently, the latency increase is roughly $O(M^2).$ On the other hand, for a higher covertness requirement, 
due to both an increase in the required number of hops (Fig.~\ref{fig:hop_vs_dep}) and an increase in the $\eta_v^{opt}$ 
the optimal route experiences exponential growth vs 
$M.$
Different behavior of the energy and the cycle detector as a function of these two parameters is due to the difference in order of their respective statistics: energy detection relies on second-order statistics, whereas DCS is a fourth-order statistic, Eq.~\eqref{equ:DCS_cal}, making it more noise-sensitive (also see Fig.~$1$ in \cite{aggarwal2025covertness}). Hence, tightening the covertness requirement $DEP^{reqd}$ reduces the required transmit power and degrades both signal estimation and detection, disproportionately affecting the cycle detector. For our network topology and QoS requirements, we found that the transition at which the cycle degrades relative to the energy detector occurs at approximately $DEP^{reqd} \approx 10^{-1}$.

\bibliographystyle{IEEEtran}  
\bibliography{refs}           

\end{document}